\newtheorem{theorem}{Theorem}
\newtheorem{corollary}[theorem]{Corollary}
\newtheorem{proposition}[theorem]{Proposition}
\numberwithin{equation}{section}
\numberwithin{theorem}{section}
\newcommand{\rr}{{\mathbb{R}}}
\newcommand{\cp}{{\mathbb{C}_+}}
\newcommand{\ee}{{\mathbb{E}\,}}
\newcommand{\pp}{{\mathbb{P}}}
\newcommand{\oh}{{\mathcal{O}}}
\newcommand{\im}{{\operatorname{Im}\,}}
\newcommand{\re}{{\operatorname{Re }\,}}
\newcommand{\tr}{{\operatorname{Tr}\,}}
\newcommand{\beq}[1]{\begin{equation} \label{#1}}
\newcommand{\eeq}{\end{equation}}
\newcommand{\zel}{{z_\ell}}
\begin{document}
\addtokomafont{author}{\raggedright}
\title{\raggedright The Localization Transition in the Ultrametric Ensemble}
\author{\hspace{-.075in}Per von Soosten and Simone Warzel}
\date{\vspace{-.2in}}
\maketitle
\minisec{Abstract}
We study the hierarchical analogue of power-law random band matrices, a symmetric ensemble of random matrices with independent entries whose variances decay exponentially in the metric induced by the tree topology on $\mathbb{N}$. We map out the entirety of the localization regime by proving the localization of eigenfunctions and Poisson statistics of the suitably scaled eigenvalues. Our results complement existing works on complete delocalization and random matrix universality, thereby proving the existence of a phase transition in this model.
\bigskip

\section{Introduction}
One-dimensional lattice Hamiltonians with random long-range hopping provide a useful and simplified testing ground for the Anderson metal-insulator transition in more complicated systems. Two prominent examples of such models are the random band matrices \cite{PhysRevLett.64.1851,PhysRevLett.67.2405}, whose entries $H_{ij}$ are zero outside some band $|i-j| \le W$, and the power-law random band matrices (PRBM) \cite{RevModPhys.80.1355,PhysRevE.54.3221}, whose entries $H_{ij}$ have variances decaying according to some power of the Euclidean distance $|i-j|$. Even for these models, the mathematically rigorous understanding is far from complete, although there has been some progress; see \cite{MR2726110,0036-0279-66-3-R02, bourgadearxiv,MR1942858,MR3085669, MR2525652} and references therein. This article is concerned with a further simplification, the ultrametric ensemble of Fyodorov, Ossipov and Rodriguez \cite{1742-5468-2009-12-L12001}, which is essentially obtained by replacing the Euclidean distance in the definition of the PRBM with the metric induced by the tree topology.

The index space of the ultrametric ensemble is $B_n = \{1,2,..., 2^n\}$ endowed with the metric
\[d(x,y) = \min  \left\{ r \geq 0 \, | \, \mbox{$x$ and $y$ lie in a common member of $\mathcal{P}_r$} \right\},\]
where $\{\mathcal{P}_r\}$ is the nested sequence of partitions defined by
\[B_n = \{1, ..., 2^{r}\} \cup \{2^r + 1, ..., 2\cdot 2^r \} \cup ... \cup \{2^{n-r-1}2^r + 1, ..., 2^n\}.\]
The basic building blocks of the ultrametric ensemble are the matrices $\Phi_{n,r}: \ell^2(B_n) \to \ell^2(B_n)$ whose entries are independent (up to the symmetry constraint) centered real Gaussian random variables with variance
\beq{Phidef} \ee \left| \langle \delta_y, \Phi_{n,r} \delta_x \rangle \right|^2 = 2^{-r} \begin{cases} 2 & \mbox{ if } d(x,y) = 0\\ 1 & \mbox{ if } 1 \le d(x,y) \le r\\ 0 & \mbox{ otherwise. }\end{cases}
\eeq
Thus $\Phi_{n,r}$ is a direct sum of $2^{n-r}$ random matrices drawn independently from the Gaussian Orthogonal Ensemble (GOE) of size $2^r$. The ultrametric ensemble with parameter $c \in \rr$ refers to the random matrix
\beq{Hdef}H_n = \frac{1}{Z_{n,c}} \sum_{r =0}^n 2^{-\frac{(1+c)}{2}r} \Phi_{n,r}
\eeq
where $\Phi_{n,r}$ and $\Phi_{n,s}$ are independent for $r \neq s$. We choose the normalizing constant $Z_{n,c}$ such that
\[\sum_{y \in B_n} \ee \left| \langle \delta_y, H_n \delta_x \rangle \right|^2 = 1,\]
which means that $Z_{n,c}$ grows exponentially in $n$ in case $c < -1$ and $Z_{n,c}$ is asymptotically constant in case $ c > - 1 $. The original definition in \cite{1742-5468-2009-12-L12001} contains an additional parameter governing the relative strengths of the diagonal and off-diagonal disorder, but this parameter does not significantly alter our analysis and so we omit it altogether. Moreover, the authors of \cite{1742-5468-2009-12-L12001} constructed the block matrices $\Phi_{n,r}$ from the Gaussian Unitary Ensemble (GUE), and our results apply to both GOE and GUE blocks with only slight changes.

The ultrametric ensemble is a hierarchical analogue of the PRBM in a sense which was first introduced to statistical mechanical models by Dyson \cite{MR0436850}, and studied rigorously in the context of quantum hopping systems with only diagonal disorder in \cite{MR1063180, MR2352276, MR2413200, MR2909106,MR1463464,vonSoosten2017}. Thus, one expects the core features of the PRBM phase transition to be present in the ultrametric ensemble as well. Indeed, the authors of  \cite{1742-5468-2009-12-L12001} present arguments at a theoretical physics level of rigor as well as numerical evidence for a localization-delocalization transition in the eigenfunctions of $H_n$ as the parameter changes from $c > 0$ to $c < 0$. The effect of the Gaussian perturbations $\Phi_{n,r}$ on the spectrum of $H_n$ can be described dynamically by Dyson Brownian motion \cite{MR0148397} and, in this sense, the critical point $c=0$ is natural because it governs whether the evolution passes the local equilibration time of this system or not.

In this article we establish the localized phase by proving that the eigenfunctions remain localized and the level statistics converge to a Poisson point process if $c > 0$. Both statements are improvements over the rigorous results for band matrices, where the only localization result is due to Schenker \cite{MR2525652} and no proof of Poisson statistics is known. For the first main result in this paper, we recall that, by the Wegner estimate \cite{MR639135}, the infinite-volume density of states measure defined through
\beq{dosdef}\nu(f) = \lim_{n \to \infty} 2^{-n} \sum_{\lambda_j \in \sigma(H_n)} f(\lambda_j)
\eeq
has a bounded Radon-Nikodym derivative, the density of states, whose values we denote by $\nu(E)$.
\begin{theorem}[Poisson statistics] \label{poissonthm} Suppose $c > 0$ and let $E \in \rr$ be a Lebesgue point of $\nu$. Then, the random measure
\[\mu_n(f) = \sum_{\lambda \in \sigma(H_n)} f(2^n(\lambda - E))\]
converges in distribution to a Poisson point process with intensity $\nu(E)$ as $n \to \infty$.
\end{theorem}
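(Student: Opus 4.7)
My plan follows the Minami--Aizenman strategy for Poisson eigenvalue statistics in the localized phase, exploiting the hierarchical structure of $H_n$ for the decoupling step. The two main ingredients I would rely on are (i) exponential decay of fractional moments of the Green's function $G_n(x,y;z)=\langle \delta_x,(H_n-z)^{-1}\delta_y\rangle$ in the ultrametric distance $d(x,y)$, which the complementary localization results of the paper should supply throughout $c>0$, and (ii) a Minami-type bound on the likelihood that two eigenvalues of $H_n$ fall in a common spectral window of size $\sim 2^{-n}$.

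The first step is a hierarchical decoupling. Fix a scale $k=k(n)$ with $k\to\infty$ and $k/n\to 0$, and introduce the truncation
\[
\hat H_n = \frac{1}{Z_{n,c}}\sum_{r=0}^k 2^{-(1+c)r/2}\Phi_{n,r},
\]
which is block diagonal with respect to $\mathcal{P}_k$ and hence a direct sum of $2^{n-k}$ independent ultrametric matrices of size $2^k$. Crucially, $H_n-\hat H_n$ only couples pairs $(x,y)$ with $d(x,y)>k$, where the fractional moment bounds already force $G_n(x,y;E+i0)$ to be exponentially small. A resolvent-identity comparison then shows that the rescaled eigenvalue processes of $H_n$ and of $\hat H_n$ share the same weak limit provided $k$ grows slowly enough with $n$.

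Having reduced matters to $\hat H_n$, the rescaled process decomposes as $\hat\mu_n=\sum_{i=1}^{2^{n-k}}\hat\mu_n^{(i)}$ into a sum of independent per-block contributions. Convergence to a Poisson point process of intensity $\nu(E)$ then follows from the classical triangular-array criterion provided (a) the total intensity $\sum_i\ee\hat\mu_n^{(i)}((-a,a))$ converges to $2a\nu(E)$ for every $a>0$, and (b) $\pp(\hat\mu_n^{(i)}((-a,a))\ge 2)=o(2^{-(n-k)})$ uniformly in $i$. Condition (a) should follow from the Wegner bound defining $\nu$, the convergence of the density-of-states measure on a single block to $\nu$ (using that $Z_{n,c}/Z_{k,c}\to 1$ for $c>0$), and the Lebesgue-point hypothesis on $E$.

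The main obstacle is verifying the Minami bound (b) within a single block. Since each block of $\hat H_n$ is itself an ultrametric matrix of size $2^k$ rather than a pure GOE, the explicit joint eigenvalue density used in the classical Minami proof is not directly available. One route is to derive a second-moment bound on the block's counting random variable via the Wegner--Minami formula for rank-two perturbations, combined with the fractional moment bounds applied inside the block; an alternative is to condition on the lower scales $r<k$ within the block, reducing the remaining randomness to the shifted GOE of size $2^k$ coming from $2^{-(1+c)k/2}\Phi_{n,k}$, for which the classical bound of order $(2^k\eta)^2$ with $\eta=2a/2^n$ applies and sums against the $2^{n-k}$ blocks to $o(1)$ since $2^{k-n}\to 0$. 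Making one of these strategies quantitative under the hierarchical coupling between scales, and ensuring the intermediate scales do not create anomalous level attraction, constitutes the bulk of the technical work.
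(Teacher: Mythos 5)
Your high-level architecture is the same as the paper's: truncate the hierarchy at scale $k$ to obtain a block-diagonal matrix $H_{n,k}$ whose rescaled eigenvalue measure is a sum of $2^{n-k}$ independent point processes, then invoke a triangular-array criterion requiring a Wegner-type intensity condition and a Minami-type no-doubles condition. However, your choice of truncation scale $k/n \to 0$ cannot work. Removing scales $r=k+1,\dots,n$ is, already in operator norm, a perturbation of size roughly $2^{-(1+c)k/2}$, which dwarfs the target spectral resolution $2^{-n}$ whenever $k = o(n)$; and even the finer Dyson-Brownian-motion resolvent-flow estimate that the paper uses (Theorem 1.1 of \cite{resflow}, applied telescopically through \eqref{hnmdecomposition}) produces an error of order $2^{3(n - (1+c/6)m)}$ for $\ee|\nu_n(P_{z_n}) - \nu_{n,m}(P_{z_n})|$, which is small only when $m > n/(1+c/6)$, a \emph{positive} fraction of $n$ bounded away from zero. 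This is then balanced against $n - m_n \to \infty$ so that the number of blocks still diverges; both constraints are compatible for every $c>0$, but $k/n\to 0$ is not, and your remark that the comparison holds ``provided $k$ grows slowly enough'' has the direction of the constraint reversed.

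Two further issues. Your proposed decoupling mechanism is circular: the fractional-moment/Green-function decay you ask the paper's localization result to supply is itself proved from precisely the same resolvent-flow bounds of \cite{resflow} that drive the Poisson argument, not from an independent Aizenman--Molchanov-type input (the ensemble has no conditionable diagonal disorder). And the Minami-type bound, which you rightly flag as the crux and leave open, is dispatched in the paper in one line via the Combes--Germinet--Klein estimate $\pp(\tr 1_B(H_{m}) \ge \ell) \le (C 2^{m} |B|)^\ell / \ell!$, which upon summing over the $2^{n-m_n}$ blocks gives $\sum_j \pp(\mu_{m_n,j}(B) \ge 2) \lesssim 2^{m_n - n} \to 0$ as soon as $n - m_n \to \infty$. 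Your second speculative route (conditioning on the sub-$k$ scales inside a block and invoking a GOE Minami bound for $2^{-(1+c)k/2}\Phi_{n,k}$) is in the spirit of one proof of the CGK inequality, but as written it is a program rather than an argument; invoking CGK directly both closes this gap and shows that the no-doubles side places no constraint beyond $n-m_n\to\infty$, so the binding constraint on the truncation scale comes entirely from the resolvent comparison.
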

The proof of Theorem \ref{poissonthm} is contained in Section \ref{poissonsection}.

The second main result of this paper says that if an eigenfunction of $H_n$ in some mesoscopic energy interval has any mass at some $x \in B_n$, then actually all but an exponentially small amount of the total mass is carried in an exponentially vanishing fraction of the volume near $x$ with high probability. We formulate this result in terms of the eigenfunction correlator
\[Q_n(x,y;W) = \sum_{\lambda \in \sigma(H_n) \cap W} |\psi_\lambda(x)\psi_\lambda(y)|,\]
which in completely delocalized regimes is typically given by
\[\sum_{\lambda \in \sigma(H_t) \cap W}  \sum_{y \neq x}|\psi_\lambda(x)\psi_\lambda(y)| \approx \sum_{\lambda \in \sigma(H_t) \cap W} 1 \approx 2^n|W|.\]
Thus, since $2^n|W|$ grows very large for mesoscopic spectral windows, it is a signature of localization if the eigenfunction correlator asymmptotically vanishes for small enough mesoscopic intervals $W$, as is proved in the following theorem.
\begin{theorem}[Eigenfunction localization]\label{localizationthm} Suppose $c > 0$ and let $E \in \rr$. Then, there exist $w, \mu, \kappa > 0$, $C<\infty$, and a sequence $m_n$ with $n - m_n \to \infty$ such that for every $x \in B_n$ the $ \ell^2$-normalized eigenfunctions satisfy
\[\pp\left( \sum_{y  \in B_n \setminus B_{m_n}(x)} Q_n(x,y;W) > 2^{-\mu n} \right) \le C \,2^{-\kappa n}\]
with
\[W = \left[E_0 - 2^{-(1-w)n}, E_0 + 2^{-(1-w)n}\right].\]
\end{theorem}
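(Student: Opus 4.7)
The plan is to establish decay of fractional moments of the resolvent $G_n(x,y;E+i\eta)$ in the tree distance $d(x,y)$ at the mesoscopic imaginary part $\eta = 2^{-(1-w)n}$, and then convert this into the eigenfunction correlator bound via the Aizenman--Molchanov inequality. Specifically, the goal is
\[
\ee|G_n(x,y;E+i\eta)|^s \;\le\; C\, 2^{-\gamma\, d(x,y)}
\]
for some $s\in(0,1)$ and some rate $\gamma = \gamma(c,s) > 0$. Using the standard bound $\ee Q_n(x,y;W) \lesssim |W|\eta^{s-1}\ee|G_n(x,y;E+i\eta)|^s$ together with the shell estimate $|\{y : d(x,y) = r\}| \le 2^r$, and noting that $|W|\eta^{s-1} \lesssim 2^{-s(1-w)n}$ for the window $|W| = 2\cdot 2^{-(1-w)n}$, one obtains
\[
\ee\sum_{y \notin B_{m_n}(x)} Q_n(x,y;W) \;\lesssim\; 2^{-s(1-w)n} \sum_{r > m_n} 2^{(1-\gamma) r},
\]
which is exponentially small in $n$ whenever $\gamma > 1 - s(1-w)$. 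Since this condition reduces to $\gamma > 0$ as $s\to 1$ and $w\to 0$, Markov's inequality then yields Theorem~\ref{localizationthm} with explicit $\mu,\kappa > 0$ and any sequence $m_n$ with $n - m_n \to \infty$ chosen at the correct rate.

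The fractional moment decay is itself obtained by induction on $n$ via the hierarchical decomposition
\[
H_n \;\stackrel{d}{=}\; \alpha_n \begin{pmatrix} H_{n-1}^{(1)} & 0 \\ 0 & H_{n-1}^{(2)} \end{pmatrix} + \beta_n\,\Phi_{n,n},
\]
where $H_{n-1}^{(1)},H_{n-1}^{(2)}$ are independent copies of $H_{n-1}$, $\alpha_n = Z_{n-1,c}/Z_{n,c}\to 1$, and $\beta_n = Z_{n,c}^{-1}\,2^{-(1+c)n/2}$. The off-diagonal block of $\beta_n\Phi_{n,n}$ is the \emph{only} source of coupling between the two halves $A,B$ of $B_n$, with Gaussian entries of standard deviation $\sim 2^{-(2+c)n/2}$. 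For $x\in A$ and $y\in B$, the resolvent identity gives
\[
G_n(x,y;z) \;=\; \sum_{u \in A,\, v \in B} G_A(x,u;z)\,(\beta_n\Phi_{n,n})(u,v)\,G_n(v,y;z),
\]
and subadditivity of $|\cdot|^s$ combined with the independence of $\Phi_{n,n}$ (after conditioning on all other randomness) produces a recursion which --- fed by the inductive decay on $G_A$ and an a priori Wegner-type bound on the transverse factor $G_n(v,y;z)$ --- propagates the fractional moment estimate from scale $n-1$ to scale $n$ with a per-step contraction of order $2^{-csn/2}$ for $c>0$.

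The principal obstacle will be establishing the uniform a priori bound $\ee|G_n(v,y;z)|^s \le C$ at the mesoscopic scale $\eta = 2^{-(1-w)n}$, which is the input needed to close the recursion. This is the hierarchical analogue of Schenker's diagonalization estimate~\cite{MR2525652} and should follow from the Gaussian nature of $\Phi_{n,n}$ by extracting and integrating out one of its entries. A secondary subtlety is the Schur-complement shift of the inner spectral parameter $z \mapsto \tilde z$ induced by decoupling the two halves: positivity of the self-energy, $\im\tilde z \ge \im z$, preserves the Wegner input at the next scale. Balancing the per-step contraction $2^{-csn/2}$ against the loss $2^{-s(1-w)n}$ from mesoscopic smoothing and the $\sum 2^r$ tree-shell volume then pins down $w,\mu,\kappa$ and $m_n$; the strict inequality $c>0$ is precisely what makes the contraction strict, matching the predicted phase transition at $c=0$.
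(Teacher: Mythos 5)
Your approach is a genuinely different route from the paper's. The paper does not use fractional moments at all: it compares $G_n$ with the block-diagonal $G_{n,m}$ by telescoping over scales $k=m+1,\dots,n$ and controlling each step $G_{n,k}-G_{n,k-1}$ with a quantitative resolvent-flow estimate for Gaussian perturbations (Theorem 1.2 of the companion paper \cite{resflow}); the localization of $G_{n,m}$ is then immediate from the block structure ($G_{n,m}(x,y;z)=0$ for $d(x,y)>m$), and Markov's inequality plus a deterministic bridge to $Q_n$ (Theorem 5.1 of \cite{resflow}) finishes. What the paper's route buys is that the only probabilistic input is a single Dyson-Brownian-motion type estimate with rate $\delta = c/6>0$ for every $c>0$; there is no recursion and no need for an a priori Wegner bound on a transverse resolvent.

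Your fractional-moment scheme, as written, has a genuine gap in the step you intend to close with the Wegner bound. Keeping your own constants: the coupling entries $\beta_n\Phi_{n,n}(u,v)$ have standard deviation $\sim 2^{-(2+c)n/2}$, so $\ee|\beta_n\Phi_{n,n}(u,v)|^s \sim 2^{-(2+c)sn/2}$. After the resolvent identity
\[
G_n(x,y;z) = -\!\!\sum_{u\in A,\,v\in B}\! G_A(x,u;z)\,(\beta_n\Phi_{n,n})(u,v)\,G_n(v,y;z),
\]
you bound $\sum_{u\in A}\ee|G_A(x,u)|^s$ by a constant (which already requires $\gamma>1$ in the inductive hypothesis, and so forces $s$ close to $1$), but for the transverse factor you only invoke the a priori Wegner bound $\ee|G_n(v,y;z)|^s\le C_s$, which, summed over the $2^{n-1}$ points $v\in B$, contributes a volume factor $2^{n-1}$ with no decay to offset it. The recursion therefore produces
\[
\ee|G_n(x,y;z)|^s \;\lesssim\; 2^{-(2+c)sn/2}\cdot 2^{n-1},
\]
and demanding this be $\le D\,2^{-\gamma n}$ with $\gamma>1$ forces $(2+c)s/2 > 2$, i.e. $s > 4/(2+c)$, which is compatible with $s<1$ only for $c>2$. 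In other words, a Wegner bound on the transverse leg is not enough to reach the full localized phase $c>0$; you would need decay, not mere boundedness, for $\ee|G_n(v,y;z)|^s$, which is precisely what you are trying to prove. Closing this gap requires either a two-sided/self-consistent recursion in which the decay of both $G_A$ and the $B$-side Schur resolvent is fed back (exploiting, e.g., that the self-energy $\Gamma^* G_A\Gamma$ has operator norm $\oh(2^{-cn})$ at $\eta = 2^{-(1-w)n}$), or some other mechanism; none of this is in the proposal. The claimed per-step contraction $2^{-csn/2}$ also does not match the entry scale $2^{-(2+c)n/2}$ once the volume factors are inserted. So while the fractional-moment program is a reasonable thing to try — it is the idea behind Schenker's band matrix result \cite{MR2525652}, whose threshold $W^{8+\epsilon}$ versus $N$ is an analogue of the spurious $c>2$ you would get here — as stated it does not establish Theorem~\ref{localizationthm} for all $c>0$.
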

The proof of Theorem \ref{localizationthm} may be found in Section \ref{localizationsection}.

Our results gain additional interest upon noting that, when $c < -1$, the ultrametric ensemble has an essential mean field character and techniques originally developed for Wigner matrices show that the energy levels agree asymptotically with those of the GOE and that the eigenfunctions are completely delocalized. We will now roughly sketch how to apply these results in the present situation and state the corresponding theorems. The key observation is that the normalizing factor $Z_{n,c}$, which scales the spectrum to $\oh(1)$, is given by
\[Z_{n,c}^2 =  \sum_{y \in B_n} \ee \left| \langle \delta_y, \left(\sum_{r =0}^n 2^{-\frac{(1+c)}{2}r} \Phi_{n,r}\right) \delta_x\rangle \right|^2 =  \left(1 - 2^{-(1+c)(n+1)} \right)\,  \frac{1 + \oh(1)}{1-2^{-(1+c)}},\]
so that the spread
\[M_n \vcentcolon= \left(\max_{x,y \in B_n} \, \ee |\langle \delta_y, H_n\delta_x \rangle|^2 \right)^{-1} = \left\{  \begin{array}{lr}  Z_{n,c}^2 \, 2^{- o(n)}    & \mbox{if} \;  c \geq -2,   \\[1ex]
2^{ (1+o(1))n} & \mbox{if} \;  c < -2, 
\end{array}
\right. \]
grows like a positive power of the system size $2^n$ when $c < -1$. The results of \cite{MR3068390} then show that the semicircle law (i.e.\ $ \nu(E) = \sqrt{(4-E^2)_+}/(2\pi) $) is valid on scales of order $M_n^{-1}$ even for the matrices
\[\widetilde{H}_n =  \frac{1}{Z_{n,c}} \sum_{r =0}^{n-1} 2^{-\frac{(1+c)}{2}r} \Phi_{n,r} + \frac{1-\sqrt{T_n}}{Z_{n,c}} 2^{-\frac{(1+c)}{2}n} \Phi_{n,n} \]
with a small part of the final $\oh(1)$ Gaussian component removed. We set $ T_n = M_n^{-1+\delta} $ with $ \delta \in (0,1) $. The validity of the local semicircle law already implies the complete delocalization of the eigenfunctions in mesoscopic windows in the bulk of the spectrum (see~\cite[Thm. 2.21]{0036-0279-66-3-R02}).
\begin{theorem}[cf. \cite{0036-0279-66-3-R02, MR3068390}] For any compact interval $ I \subset (-2,2) $ there exist $\kappa, \epsilon > 0$ such that for all $ E \in I $ the $ \ell^2$-normalized eigenfunctions of $H_n$ in $[E-M_n^{-1}, E+M_n^{-1}]$ satisfy
\[\|\psi_{\lambda}\|_\infty =\oh(M_n^{-1/2 + \epsilon})\]
with probability $1 - \oh(N^{-\kappa})$.
\end{theorem}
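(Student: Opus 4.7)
The plan is to combine the local semicircle law for generalized Wigner matrices from~\cite{MR3068390} with the standard derivation of eigenfunction delocalization from resolvent control, as codified in~\cite[Thm.~2.21]{0036-0279-66-3-R02}. The auxiliary matrix $\widetilde H_n$ is introduced above with a future DBM argument (for universality of gap statistics) in mind; for the present delocalization claim alone one may work directly with $H_n$, whose variance profile is in fact closer to the doubly-stochastic one required by the local law.

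The first step is to verify that $H_n$ fits the hypotheses of~\cite{MR3068390}. The variance matrix $S_{xy} = \ee |\langle \delta_y, H_n \delta_x\rangle|^2$ is doubly stochastic by choice of $Z_{n,c}$, is bounded entrywise by $M_n^{-1}$ by definition of $M_n$, and when $c < -1$ satisfies $M_n \geq N^{\alpha}$ for some $\alpha>0$ with $N=2^n$. The Gaussian entries give trivial moment bounds. The second step is to invoke the local semicircle law, which yields for every compact $I \subset (-2,2)$ and every $\epsilon > 0$, with probability $1-\oh(N^{-\kappa})$,
\[\max_{x \in B_n} \left| \langle \delta_x, (H_n - E - i\eta)^{-1}\delta_x\rangle - m_{\mathrm{sc}}(E+i\eta)\right| = \oh(1)\]
uniformly in $E \in I$ and $\eta \geq M_n^{-1+\epsilon}$, where $m_{\mathrm{sc}}$ is the Stieltjes transform of the semicircle law.

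The third step converts this into the desired $\ell^\infty$ bound. For any eigenvalue $\lambda$ with $|\lambda - E|\leq M_n^{-1}$ and normalized eigenfunction $\psi_\lambda$, the spectral decomposition of the resolvent yields
\[|\psi_\lambda(x)|^2 \leq \eta \cdot \im \langle \delta_x, (H_n - \lambda - i\eta)^{-1}\delta_x\rangle \leq C\eta\]
upon choosing $\eta = M_n^{-1+\epsilon}$ and using boundedness of $m_{\mathrm{sc}}$ on $I$. This delivers $\|\psi_\lambda\|_\infty = \oh(M_n^{-1/2+\epsilon/2})$ simultaneously for all eigenfunctions in the prescribed window, which is the claim.

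The main obstacle is checking the flatness assumption in~\cite{MR3068390}. In the regime $-2 < c < -1$ the off-diagonal variances decay exponentially in the hierarchical distance $d(x,y)$, so the variance profile is markedly non-constant, and one is outside the purely Wigner regime $c<-2$ where $S_{xy}\sim 1/N$ uniformly. Nevertheless, the row sums are exactly $1$, the entrywise bound $S_{xy}\leq M_n^{-1}$ is uniform, and no row is degenerate, so $H_n$ falls within the class of generalized Wigner-type matrices to which the arguments of~\cite{MR3068390} apply with spread $M_n$. The remaining deterministic manipulations (Schur complement, Ward identity, stability of the self-consistent equation) are standard.
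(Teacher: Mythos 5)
Your proof is essentially the paper's argument, carried out with a bit more detail. The paper provides only a brief citation-sketch: it verifies that $M_n$ grows like a positive power of $N=2^n$ when $c<-1$, invokes \cite{MR3068390} for the local semicircle law at scale $M_n^{-1}$, and then quotes \cite[Thm.~2.21]{0036-0279-66-3-R02} for the passage from the local law to complete delocalization. You perform exactly these steps and, in addition, spell out the standard spectral inequality $|\psi_\lambda(x)|^2 \le \eta\, \im G_n(x,x;\lambda+i\eta)$ at $\eta = M_n^{-1+\epsilon}$, which is precisely the content of the cited Thm.~2.21. Your observation that one may work with $H_n$ directly (rather than $\widetilde H_n$, which is introduced only to set up the DBM universality argument in the next theorem) is correct and consistent with the paper: the delocalization theorem is indeed stated for $H_n$, whose variance matrix is exactly doubly stochastic by the choice of $Z_{n,c}$. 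The ``flatness'' concern you raise is appropriately dismissed: \cite{MR3068390} only requires the stochasticity of $S$ together with the entrywise bound $S_{xy}\le M_n^{-1}$ and $M_n\ge N^{\alpha}$; no uniform comparability of the variances is needed, and the non-constant profile for $-2<c<-1$ is allowed. In short, no gap.
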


Random matrix universality of the local statistics may be expressed by saying that the $k$-point correlation functions
\[\rho^{(k)}_{H_n}(\lambda_1, .. \lambda_k) =  \int_{\rr^{2^n - k}} \! \rho_{H_n} (\lambda_1, ..., \lambda_{2^n}) \, d\lambda_{k+1} ... \, d\lambda_{2^n},\]
the $k$-th marginals of the symmetrized eigenvalue density $\rho_{H_n}$, locally agree with the corresponding objects for the GOE asymptotically. For this, we employ the work of Landon, Sosoe and Yau \cite[Thm.~2.2]{landonsosoeyauarxiv} concerning the universality of Gaussian perturbations for
\[H_n = \widetilde{H}_n +  \frac{\sqrt{T_n}}{Z_{n,c}} 2^{-\frac{(1+c)}{2}n} \Phi_{n,n}.\]
For the statement of the theorem, let
\begin{align*}\Psi_{n,E}^{(k)}(\alpha_1,...,\alpha_k) &= \rho^{(k)}_{H_n}\left( E + 2^{-n} \frac{\alpha_1}{\rho_{sc}(E)}, ..., E + 2^{-n} \frac{\alpha_k}{\rho_{sc}(E)} \right)\\
 &-  \rho^{(k)}_{GOE}\left( E + 2^{-n} \frac{\alpha_1}{\rho_{sc}(E)}, ..., E + 2^{-n} \frac{\alpha_k}{\rho_{sc}(E)} \right),
\end{align*}
where $\rho^{(k)}_{GOE}$ is the $k$-point correlation function of the $2^n \times 2^n$ GOE and $\rho_{sc}$ is the density of the semicircle law.
\begin{theorem}[cf.~\cite{landonsosoeyauarxiv,MR3068390}] Suppose $c < -1$, $E \in (-2,2)$ and $k \geq 1$. Then,
\[\lim_{n \to \infty} \int_{\rr^k} \! O(\alpha) \Psi_{n,E}^{(k)}(\alpha) \, d\alpha = 0\]
for every $O \in C_c^\infty(\rr^k)$.
\end{theorem}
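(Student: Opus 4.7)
The plan is to treat the last block $r = n$ of the ultrametric sum as a small additive Gaussian perturbation of $\widetilde{H}_n$ and invoke the universality of Gaussian perturbations of Landon, Sosoe and Yau~\cite{landonsosoeyauarxiv}. First I would rewrite
\[
H_n \;=\; \widetilde{H}_n + \sqrt{t_n}\, G_n,
\]
where $G_n$ is an independent standard GOE matrix of size $2^n$ (off-diagonal entry variance $2^{-n}$) and $t_n$ is obtained by matching variances. Since the block $\Phi_{n,n}$ has off-diagonal variance $2^{-n}$, this gives $t_n = T_n \cdot 2^{-(1+c)n}/Z_{n,c}^2$ up to constants. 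A direct estimate of the geometric sum defining $Z_{n,c}^2$ shows that $Z_{n,c}^2 \asymp 2^{-(1+c)n}$ when $c < -1$, so in the end $t_n \asymp T_n = M_n^{-1+\delta}$.

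The remaining task is to check the input hypotheses of \cite[Thm.~2.2]{landonsosoeyauarxiv}. The matrix $\widetilde{H}_n$ satisfies the local semicircle law on scale $M_n^{-1}$ via the application of~\cite{MR3068390} already invoked in the previous theorem; as a standard consequence, one also obtains eigenvalue rigidity on that scale. The DBM-time condition required to equilibrate local statistics is $t_n \gg N^{-1+\epsilon}$ for some $\epsilon > 0$, where $N = 2^n$. When $c < -1$, one has $M_n = 2^{an}$ with $a = \min(|1+c|, 1) \in (0,1]$, so $t_n = 2^{-a(1-\delta)n}$, and this beats $2^{-(1-\epsilon)n}$ as soon as $a(1-\delta) < 1 - \epsilon$. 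For $c \in (-2,-1)$ this holds automatically for small $\delta$, and for $c \leq -2$ it suffices to take $\delta > \epsilon$. Thus $\delta \in (0,1)$ can be fixed once and for all to satisfy both the local law and the universality time-scale condition.

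Once the local law, rigidity, and time-scale condition are verified, \cite[Thm.~2.2]{landonsosoeyauarxiv} applies as a black box to yield convergence of the rescaled $k$-point correlation functions of $H_n$ to those of the $2^n \times 2^n$ GOE in the sense of smearing against $C_c^\infty$ test functions, which is exactly the statement of the theorem. The main obstacle I anticipate is not the universality input itself but the careful verification that the variance profile of $\widetilde{H}_n$, which spans many scales and is highly inhomogeneous due to the nested block structure, satisfies the general assumptions of \cite{MR3068390} uniformly for $E$ in compact subsets of $(-2,2)$. This boils down to checking that the spread $M_n$ correctly captures the relevant mesoscopic scale even after the slight reduction of the last Gaussian component, which is a direct computation using the independence and block-diagonal structure of the $\Phi_{n,r}$.
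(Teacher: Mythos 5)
You follow the paper's route essentially verbatim: split off (part of) the final Gaussian summand so that $H_n = \widetilde H_n + \sqrt{t_n}\,G_n$ with $t_n \asymp T_n = M_n^{-1+\delta}$, establish the local semicircle law for $\widetilde H_n$ on scale $M_n^{-1}$ via \cite{MR3068390}, and feed this into the Gaussian-perturbation universality of \cite[Thm.~2.2]{landonsosoeyauarxiv}. One small inaccuracy: the DBM time threshold in \cite{landonsosoeyauarxiv} is $N^{\epsilon}\eta_*$ with $\eta_*$ the scale of the local law, i.e.\ $\eta_* = M_n^{-1}$ rather than $N^{-1}$ -- which is strictly larger than $N^{-1+\epsilon}$ when $-2 < c < -1$ -- but the choice $T_n = M_n^{-1+\delta}$ clears this stronger threshold as well, since $T_n/\eta_* = M_n^{\delta}$ grows polynomially in $N$ whenever $c < -1$, so the conclusion is unaffected.
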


Summing up, these results rigorously prove the existence of a metal-insulator transition in the ensemble of ultrametric random matrices. In particular, our results allow an approach all the way to the critical point from the localized side $ c > 0 $, which greatly improves upon the best known corresponding result for random band matrices \cite{MR2525652}. However, the above arguments do not cover the regime $ c \in [-1,0) $, in which the local eigenvalue statistics are still expected to be of Wigner-Dyson-Mehta type as in the case $ c < -1 $ \cite{1742-5468-2009-12-L12001}.

\section{Proof of Poisson Statistics}\label{poissonsection}
When $ c > 0 $, the limit $ \lim_{n\to \infty} Z_{n,c}  \in (0,\infty)  $ exists. Thus, we may drop the normalizing constant $ Z_{n,c} $ from the definition of  $ H_n $ in the remainder of this paper without any loss of generality. The goal of this section is to prove Theorem \ref{poissonthm} by approximating $H_n  \equiv  \sum_{r=0}^n 2^{-\frac{1 + c}{2} r}\Phi_{n,r} $ with the truncated Hamiltonian
\beq{hnmdefinition} H_{n,m} =  \sum_{r=0}^m 2^{-\frac{1 + c}{2} r}\Phi_{n,r},
\eeq
which has the property that, for any $m \le k \le n$,
\beq{hnmdecomposition} H_{n,m} = \bigoplus_{j=1}^{2^{n-k}} H_{k,m}^{(j)},
\eeq
where each $H_{k,m}^{(j)}$ is an independent copy of $H_{k}$. Therefore
\[\mu_{n,m}(f) = \sum_{\lambda \in \sigma(H_{n,m})} f(2^n(\lambda - E)) \]
consists of $2^{n-m}$ independent components, a fact whose relevance to Theorem \ref{poissonthm} is contained in the following characterization of Poisson point processes.
\begin{proposition} \label{poissoncharacterization}
Let $\{\mu_{n,j} \, | \, j = 1, ..., N_n\}$ be a collection of point processes such that:
\begin{enumerate}
\item The point processes $\{\mu_{n,1}, ..., \mu_{n, N_n}\}$ are independent for all $n \geq 1$.
\item If $B \subset \rr$ is a bounded Borel set, then
\[\lim_{n \to \infty} \sup_{j \le N_n} \pp(\mu_{n,j}(B) \geq 1) = 0.\]
\item There exists some $c \geq 0$ such that if $B \subset \rr$ is a bounded Borel set with $|\partial B| = 0$, then
\[\lim_{n \to \infty} \sum_{j=1}^{N_n} \pp(\mu_{n,j}(B) \geq 1) = c|B|\]
and
\[\lim_{n \to \infty} \sum_{j=1}^{N_n} \pp(\mu_{n,j}(B) \geq 2) = 0 .\]
\end{enumerate}
Then, $\mu_n = \sum_j \mu_{n,j}$ converges in distribution to a Poisson point process with intensity $c$.
\end{proposition}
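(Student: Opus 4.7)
The plan is to establish weak convergence of $\mu_n = \sum_j \mu_{n,j}$ to the Poisson point process of intensity $c$ by proving convergence of Laplace functionals. Concretely, for every $f \in C_c(\rr)$ with $f \ge 0$ I would show
\[\lim_{n \to \infty} \ee e^{-\mu_n(f)} = \exp\left(-c\int_\rr (1 - e^{-f(x)})\, dx\right),\]
which characterizes the Poisson law on the space of counting measures (Kallenberg's theorem). By assumption 1, the left-hand side factorizes as $\prod_j (1 - a_{n,j})$, where $a_{n,j} = 1 - \ee e^{-\mu_{n,j}(f)} \ge 0$, so it suffices to analyze $\sum_j \log(1 - a_{n,j})$.

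The first step is uniform infinitesimality. Letting $B = \supp f$, the pointwise bound $1 - e^{-\mu_{n,j}(f)} \le \mathbf{1}_{\mu_{n,j}(B) \ge 1}$ yields $a_{n,j} \le \pp(\mu_{n,j}(B) \ge 1)$, so that assumption 2 forces $\sup_j a_{n,j} \to 0$. The first half of assumption 3, applied to any compact interval containing $B$, furthermore shows that $\sum_j a_{n,j}$ stays bounded. The elementary expansion $\log(1 - x) = -x + O(x^2)$ for small $x \ge 0$ therefore gives
\[\sum_j \log(1 - a_{n,j}) = -\sum_j a_{n,j} + O\bigl(\sup_j a_{n,j} \cdot \sum_j a_{n,j}\bigr) = -\sum_j a_{n,j} + o(1),\]
reducing the proof to the computation $\sum_j a_{n,j} \to c\int_\rr(1 - e^{-f(x)})\, dx$.

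For an indicator test function $f = t\mathbf{1}_B$ with $t > 0$ and $|\partial B| = 0$, decomposing $a_{n,j}$ according to the number of points in $B$ produces
\[a_{n,j} = (1 - e^{-t})\,\pp(\mu_{n,j}(B) \ge 1) + R_{n,j}, \qquad |R_{n,j}| \le \pp(\mu_{n,j}(B) \ge 2),\]
and the two halves of assumption 3 immediately give $\sum_j a_{n,j} \to c(1 - e^{-t})|B|$, matching the target. Extension by linearity to nonnegative step functions with pairwise disjoint boundaryless level sets follows by the same bookkeeping. Finally, for a generic $f \in C_c^+(\rr)$ I approximate from above and below by such step functions $f^\pm$ with $\|f - f^\pm\|_\infty \to 0$; monotonicity of $f \mapsto \ee e^{-\mu_n(f)}$ and continuity of $f \mapsto \int (1-e^{-f(x)})\, dx$ sandwich the Laplace functional and complete the convergence.

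The main obstacle is the step-function case with several level sets $f = \sum_k t_k \mathbf{1}_{B_k}$, where $\mu_{n,j}(f)$ can receive simultaneous contributions from points scattered across different $B_k$'s, so the single-block decomposition above has to be extended to track these cross-block configurations. Fortunately, any such configuration belongs to $\{\mu_{n,j}(\bigcup_k B_k) \ge 2\}$, and applying the second half of assumption 3 to the union $\bigcup_k B_k$ (which again has $|\partial (\cup_k B_k)| = 0$) kills these terms in the limit. Once this is in place, the remainder is the classical triangular-array Poisson limit theorem, now lifted from scalar Bernoulli variables to point processes by carrying all estimates at the level of the test function $f$.
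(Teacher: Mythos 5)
Your proposal is essentially correct. Note first that the paper itself does not prove Proposition~\ref{poissoncharacterization}: it is stated as a known characterization, with the reference to \cite{MR3364516} supplying only the fact that convergence in distribution of point processes can be tested on the rescaled Poisson kernels $P_z$. So there is no in-text proof to compare against; you have given a self-contained argument for a classical triangular-array Poisson theorem (a version of Grigelionis' theorem), and the route you chose -- factorizing the Laplace functional $\prod_j(1-a_{n,j})$, showing uniform infinitesimality from hypothesis~2 and boundedness of $\sum_j a_{n,j}$ from hypothesis~3, expanding the logarithm, and then computing the limit of $\sum_j a_{n,j}$ first on indicators, then step functions, then generic $f\in C_c^+$ by sandwiching -- is indeed the standard textbook proof. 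The one place where I would insist on more care is the multi-level step function computation: you correctly identify that the event $\{\mu_{n,j}(\cup_k B_k)\ge 2\}$ absorbs all cross-block contributions, but you should also state explicitly that $\pp(\mu_{n,j}(B_k)\ge 1\text{ and nothing elsewhere})$ differs from $\pp(\mu_{n,j}(B_k)\ge 1)$ by a term bounded by the same union event, so that after summing over $j$ the main term is $\sum_k(1-e^{-t_k})\sum_j\pp(\mu_{n,j}(B_k)\ge 1)\to c\sum_k(1-e^{-t_k})|B_k|$ and the error vanishes. With that spelled out the argument is complete. You could also shorten the proof slightly by mirroring the paper's choice of test function and working directly with the kernels $P_z$ rather than $C_c^+$ functions, at the cost of importing the corresponding characterization from \cite{MR3364516}; your route is more self-contained.
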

We recall \cite{MR3364516} that a sequence of point processes $\mu_n$ converges in distribution to $\mu$ whenever
\[\lim_{n \to \infty} \ee e^{-\mu_n(P_z)}= \ee e^{-\mu(P_z)}\]
for all $z \in \cp$, where $P_z$ is the rescaled Poisson kernel
\beq{poissonkerneldef} P_z(\lambda) = \im \frac{1}{\lambda - z} = \frac{\im z}{(\lambda - \re z)^2 + (\im z)^2}.
\eeq
Hence, Theorem \ref{poissonthm} follows by furnishing a sequence $m_n$ such that Proposition~\ref{poissoncharacterization} applies to $\mu_{n, m_n}$ and
\beq{needtoprove} \lim_{n \to \infty} \ee e^{-\mu_{n,m_n}(P_z)} = \lim_{n \to \infty} \ee e^{-\mu_n(P_z)}
\eeq
for all $z \in \cp$.

The difference $ H_n - H_{n,n-1} =  \sqrt{t\,}  \Phi_{n,n} $ is a Gaussian perturbation with time parameter $t = 2^{-(1+c) n }$. Therefore, Theorem 1.1 of \cite{resflow} shows that there exists $ C_z < \infty $ such that for all $\ell \geq n$ we have
\begin{align} \label{resflowbound1} \frac{1}{2^n}\, \ee \left| \tr (H_n - \zel)^{-1} - \tr (H_{n,n-1} - \zel)^{-1}\right| &\le C_z \, 2^{-\frac{c}{2}n-1}\,\left( 1+ 2^{3(\ell-n)}\right) \nonumber \\
&\leq C_z \, 2^{-\frac{c}{2}n} \,  2^{3(\ell-n) } 
\end{align}
with $\zel = E + 2^{-\ell}z$. Our strategy in achieving \eqref{needtoprove} thus consists of applying  \eqref{resflowbound1} to the finite-volume density of states measures
\[\nu_n(f) = 2^{-n} \tr f \left(H_n\right) ,   \quad \nu_{n,m}(f) = 2^{-n} \tr f \left(H_{n,m}\right) \]
in an iterative fashion.

\begin{theorem} \label{approximationbynunm} There exist $C_z < \infty$ and $\delta > 0$ such that
 \[ \ee \left| \nu_{n}\left(P_\zel\right) - \nu_{n,m}\left(P_\zel\right) \right| \le C_z\, 2^{3(\ell - (1+\delta) m)}\]
 for all $\ell \geq n$.
\end{theorem}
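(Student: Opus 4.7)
The plan is to telescope $H_n - H_{n,m}$ across the intermediate truncations and reduce each increment to a single application of the one-step bound \eqref{resflowbound1}. Concretely, writing $F(A) = \tr (A - z_\ell)^{-1}$, one has the identity
\[\nu_n(P_{z_\ell}) - \nu_{n,m}(P_{z_\ell}) = \frac{1}{2^n} \sum_{k=m+1}^n \im \bigl[ F(H_{n,k}) - F(H_{n,k-1}) \bigr],\]
with the convention $H_{n,n} = H_n$. The key point is that, by the block decomposition \eqref{hnmdecomposition}, both $H_{n,k}$ and $H_{n,k-1}$ are block diagonal with $2^{n-k}$ blocks that are, respectively, independent copies of $H_k$ and $H_{k,k-1}$ (and the blocks of $H_{n,k-1}$ sit inside the corresponding blocks of $H_{n,k}$).

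Taking the trace turns into a sum over these $2^{n-k}$ blocks, and so the triangle inequality together with \eqref{resflowbound1} applied at level $k$ (which is permitted since $\ell \geq n \geq k$) yields
\[\ee \bigl| \nu_{n,k}(P_{z_\ell}) - \nu_{n,k-1}(P_{z_\ell}) \bigr| \leq \frac{2^{n-k}}{2^n} \, \ee \bigl| F(H_k) - F(H_{k,k-1}) \bigr| \leq C_z \, 2^{-\frac{c}{2} k} \, 2^{3(\ell - k)}.\]
The crucial cancellation here is that the combinatorial factor $2^{n-k}/2^n = 2^{-k}$ is exactly matched by the factor $2^k$ arising from undoing the normalization $\frac{1}{2^k}$ in \eqref{resflowbound1}, so the bound depends on $k$ but not on $n$.

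Summing over $m < k \leq n$ gives a geometric series with ratio $2^{-(3 + c/2)}$ in $k$, hence
\[\ee \bigl| \nu_n(P_{z_\ell}) - \nu_{n,m}(P_{z_\ell}) \bigr| \leq C_z \sum_{k=m+1}^n 2^{3\ell - (3 + c/2) k} \leq C_z' \, 2^{3\ell - (3 + c/2) m}.\]
Since $c > 0$, choosing $\delta = c/6 > 0$ so that $3 + c/2 = 3(1+\delta)$ rewrites the right-hand side as $C_z' \, 2^{3(\ell - (1+\delta) m)}$, which is the claimed inequality.

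The calculation itself is routine once the telescoping and block decomposition are in place; the only real concern is bookkeeping. The main points that need verifying are that \eqref{resflowbound1} remains applicable at each intermediate level $k$ (which relies on $\ell \geq k$, inherited from $\ell \geq n$), that the $2^{n-k}$ block count cancels precisely against the $2^{-k}$ normalization, and that the positive rate $c/2$ appearing in \eqref{resflowbound1} is strictly positive so that the geometric series is dominated by its first term and the strict inequality $\delta > 0$ can be extracted.
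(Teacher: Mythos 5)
Your proposal is correct and follows essentially the same route as the paper: telescope $\nu_n - \nu_{n,m}$ over the intermediate truncations $\nu_{n,k}$, use the block decomposition \eqref{hnmdecomposition} to reduce each increment $\nu_{n,k}-\nu_{n,k-1}$ to $2^{n-k}$ independent copies of the one-step difference at scale $k$, apply \eqref{resflowbound1} at level $k$, observe the cancellation $2^{-n}\cdot 2^{n-k}\cdot 2^k = 1$, and sum the resulting geometric series with $\delta = c/6$. The bookkeeping and the choice of $\delta$ both match the paper's proof.
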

\begin{proof} The estimate \eqref{resflowbound1} proves that
\beq{thmappl} \ee \left| \nu_{k}\left(P_\zel\right) - \nu_{k,k-1}\left(P_\zel\right) \right| \le C_z\,2^{3(\ell - (1+\delta) k)}
\eeq
with $\delta = c/6$ when $\ell \geq k$. Since $\nu_n - \nu_{n,m}$ is given by a telescopic sum,
\[\nu_n(P_\zel) - \nu_{n,m}(P_\zel) =  \sum_{k=m+1}^{n} \left( \nu_{n,k}(P_\zel) -  \nu_{n,k-1}(P_\zel)\right),\]
the decomposition~\eqref{hnmdecomposition} implies that
\beq{telescopicterm} \nu_{n,k}(P_\zel) -  \nu_{n,k-1}(P_\zel)= 2^{-(n-k)} \sum_{j=1}^{2^{n-k}} \left( \nu_{k}(P_\zel) -  \nu_{k,k-1}(P_\zel) \right).
\eeq
Applying \eqref{thmappl} to each term in \eqref{telescopicterm} yields
\begin{align*} \ee \left| \nu_{n}\left(P_\zel\right) - \nu_{n,m}\left(P_\zel\right) \right| & \le \sum_{k=m+1}^{n} C_z\, 2^{3(\ell - (1 + \delta) k)} \le C_z\, 2^{3(\ell - (1 + \delta) m)}. 
\end{align*}
\end{proof}

Theorem \ref{approximationbynunm} has two important implications for the measures $\mu_n$ and $\mu_{n,m}$ which are based on the identities $\mu_n(P_z) =  \nu_n\left(P_{z_n}\right)$ and $\mu_{n,m}(P_z) =  \nu_{n,m}\left(P_{z_n}\right)$. The first of these enables us to find a suitable sequence $\mu_{n,m_n}$ satisfying \eqref{needtoprove}.
\begin{corollary}\label{divisibility}
There exists a sequence $m_n$ with $m_n \to \infty$ and $n - m_n \to \infty$ such that
\[\lim_{n \to \infty} \ee \left|\mu_n(P_z) -\mu_{n,m_n}(P_z) \right| = 0\]
for all $z \in \cp$.
\end{corollary}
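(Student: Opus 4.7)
The plan is to deduce the corollary directly from Theorem \ref{approximationbynunm} by inserting the right choice of scale $\ell = n$ and then selecting $m_n$ in the window the theorem leaves open. First I would record the scaling identity $\mu_n(P_z) = \nu_n(P_{z_n})$ and $\mu_{n,m}(P_z) = \nu_{n,m}(P_{z_n})$, which follow from a direct computation using the definition \eqref{poissonkerneldef}: one checks that $P_z(2^n(\lambda-E)) = 2^{-n} P_{z_n}(\lambda)$ since $z_n = E + 2^{-n} z$, and then the factor $2^{-n}$ matches the normalization in $\nu_n$ exactly.

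With this identity in hand, I would specialize Theorem \ref{approximationbynunm} to $\ell = n$ (which is the boundary case $\ell \geq n$ permitted by the theorem) to obtain
\[\ee \left| \mu_n(P_z) - \mu_{n,m}(P_z) \right| \;=\; \ee \left| \nu_n(P_{z_n}) - \nu_{n,m}(P_{z_n}) \right| \;\leq\; C_z\, 2^{3(n-(1+\delta)m)}.\]
The corollary then reduces to choosing an integer sequence $m_n$ satisfying the three simultaneous constraints $m_n \to \infty$, $n - m_n \to \infty$, and $(1+\delta)m_n - n \to \infty$ (the last being equivalent to the right-hand side above vanishing).

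Because $\delta > 0$ is strictly positive, the two intervals $n/(1+\delta) \ll m_n \ll n$ are not in conflict, so there is room to pick such a sequence. A convenient explicit choice is $m_n = \lfloor n/(1 + \delta/2) \rfloor$: then $n - m_n = \frac{\delta/2}{1+\delta/2}\, n + \oh(1) \to \infty$ and $(1+\delta)m_n - n = \frac{\delta/2}{1+\delta/2}\, n + \oh(1) \to \infty$ as well, so that the bound above decays geometrically in $n$ for every fixed $z \in \cp$. Summing both conclusions, this choice of $m_n$ yields the claim.

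There is no substantial obstacle here; the argument is a bookkeeping consequence of Theorem \ref{approximationbynunm}. The only point that requires care is making sure that the same sequence $m_n$ works uniformly enough for each fixed $z \in \cp$: since the constant $C_z$ in Theorem \ref{approximationbynunm} depends on $z$ but not on $n$ or $m$, and since the bound must only be verified pointwise in $z$ for the stated limit, this causes no trouble.
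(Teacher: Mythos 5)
Your proof is correct and follows essentially the same route as the paper: record the rescaling identities $\mu_n(P_z)=\nu_n(P_{z_n})$ and $\mu_{n,m}(P_z)=\nu_{n,m}(P_{z_n})$, apply Theorem~\ref{approximationbynunm} at $\ell=n$, and then exploit $\delta>0$ to fit $m_n$ into the window $n/(1+\delta)\ll m_n\ll n$. The only cosmetic difference is that you give an explicit choice $m_n=\lfloor n/(1+\delta/2)\rfloor$ where the paper merely asserts existence of such a sequence; both are equivalent.
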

\begin{proof} Since $\delta > 0$, there exists a sequence $m_n$ with $m_n \to \infty$, $n - m_n \to \infty$ and $n - (1 + \delta)m_n \to -\infty$. By applying Theorem~\ref{approximationbynunm} with $\ell = n$, we obtain
\[ \ee \left|\mu_n(P_z) - \mu_{n,m_n}(P_z)\right| \le C_z 2^{3(n - (1 + \delta)m_n)} \to 0.\]
\end{proof}

To finish the proof of Theorem \ref{poissonthm}, we need to show that $\mu_{n,m_n}$ satisfies the hypothesis of Proposition~\ref{poissoncharacterization}. By~\eqref{hnmdecomposition}, $\mu_{n,m_n}$ is a sum of point processes
\[\mu_{n,m_n} = \sum_{j=1}^{2^{n-m_n}} \mu_{m_n,j}\]
with independent $\mu_{m_n, j}$. If $B \subset \rr$ is a bounded Borel set, the theorem of Combes-Germinet-Klein \cite{MR2505733} asserts that 
$\pp(\tr 1_B(H_{m}) \geq \ell) \le \left(C \, 2^m |B| \right)^\ell/ \ell! $
and hence for any $ \ell \geq 0 $:
\begin{equation}\label{cgkbound}
\pp(\mu_{m_n, j}(B) \geq \ell) \le \frac{(C |B| \, 2^{m_n-n})^\ell}{\ell!}.
\end{equation}
Since $n-m_n \to \infty$, the first hypothesis of Proposition~\ref{poissoncharacterization} follows. Writing
\[X(n, \ell) = \sum_{j=1}^{2^{n-m}} \pp(\mu_{m_n, j}(B) \geq \ell),\]
~\eqref{cgkbound} implies
\[X(n,\ell) \le 2^{n-m_n} \frac{(C |B|\, 2^{m_n-n})^\ell}{\ell!} \to 0\]
when $\ell \geq 2$. In particular, $X(n, 2) \to 0$ and the last hypothesis of Proposition~\ref{poissoncharacterization} is satisfied. It remains to prove the remaining hypothesis of Proposition~\ref{poissoncharacterization}, which is the second important consequence of Theorem \ref{approximationbynunm} and is contained in the following theorem.
\begin{theorem}  Let $B\subset \mathbb{R} $ be a bounded Borel set. Then,
\[\lim_{n \to \infty} X(n,1) = \nu(E)|B|.\]
\end{theorem}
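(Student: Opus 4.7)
My plan is to reduce $X(n,1)$ to the expectation $\ee\mu_{n,m_n}(B)$ via inclusion--exclusion, then identify the limit using Theorem~\ref{approximationbynunm}, Stieltjes inversion, and the Lebesgue-point hypothesis on $\nu$.

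The reduction uses the identity $\ee N = \sum_{k\geq 1}\pp(N\geq k)$ for $\nn$-valued $N$: applied to each $\mu_{m_n,j}(B)$ and summed over $j$, it yields $X(n,1) = \ee\mu_{n,m_n}(B) - \sum_{k\geq 2}X(n,k)$. The bound~\eqref{cgkbound} controls the tail as
\[\sum_{k\geq 2}X(n,k) \leq \sum_{k\geq 2} 2^{n-m_n}\,(C|B|2^{m_n-n})^k/k! = O(|B|^2\, 2^{m_n-n}) = o(1),\]
so $X(n,1) = \ee\mu_{n,m_n}(B) + o(1)$. Moreover, the block decomposition~\eqref{hnmdecomposition} gives $\ee\nu_{n,m_n} = \ee\nu_{m_n}$ upon taking expectations, so $\ee\mu_{n,m_n}(B) = 2^n\,\ee\nu_{m_n}(E+2^{-n}B)$.

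To identify the limit I would apply Stieltjes inversion at smoothing scale $2^{-n}\eta_0$ for an auxiliary parameter $\eta_0 > 0$. Combined with the Wegner-bounded density this yields
\[2^n\,\ee\nu_{m_n}(E+2^{-n}B) = \frac{1}{\pi}\int_B \ee\nu_{m_n}\bigl(P_{E+2^{-n}(s+i\eta_0)}\bigr)\,ds + O(\eta_0).\]
Theorem~\ref{approximationbynunm} with $\ell = n$, applied uniformly in $s\in B$, replaces $\ee\nu_{m_n}$ by $\ee\nu_n$ with further cost $O(2^{3(n-(1+\delta)m_n)}) = o(1)$. The analogous Stieltjes inversion for $\nu$ (also with Wegner-bounded density) gives $\frac{1}{\pi}\int_B \nu(P_{E+2^{-n}(s+i\eta_0)})\,ds = 2^n\nu(E+2^{-n}B) + O(\eta_0)$, and the Lebesgue-point hypothesis delivers $2^n\nu(E+2^{-n}B) \to \nu(E)|B|$. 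Sending $n\to\infty$ and then $\eta_0\to 0$ closes the argument, modulo showing $\int_B(\ee\nu_n - \nu)(P_{E+2^{-n}(s+i\eta_0)})\,ds \to 0$.

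The main obstacle is this last comparison: showing $\ee\nu_n$ and $\nu$ agree uniformly against Poisson kernels localized at scale $2^{-n}$ near $E$. This lies finer than what the weak convergence $\ee\nu_n \to \nu$ provides directly. The resolution I have in mind is to apply Theorem~\ref{approximationbynunm} a second time to transfer $\ee\nu_n$ onto $\ee\nu_m$ for a suitably chosen $m$, then invoke the weak convergence of $\ee\nu_m$ to $\nu$ together with the Wegner bound on both densities at an intermediate smoothing scale $\eta_1$ satisfying $2^{-n}\ll \eta_1\ll 1$. Balancing the three scales $\eta_0$, $\eta_1$ and the truncation parameter $m$ so that each approximation is quantitatively controlled is the key technical point.
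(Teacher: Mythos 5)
Your combinatorial reduction $X(n,1) = \ee\mu_{n,m_n}(B) + o(1)$ via the layer-cake identity $\ee N = \sum_{k\geq 1}\pp(N\geq k)$ and the Combes--Germinet--Klein tail bound~\eqref{cgkbound} is exactly the paper's first step, and the identity $\ee\nu_{n,m_n}=\ee\nu_{m_n}$ from the block decomposition~\eqref{hnmdecomposition} is also used in the same way. The real work is showing $\ee\mu_{n,m_n}(B)\to\nu(E)|B|$, and here your route diverges in two respects: you propose Stieltjes inversion at scale $2^{-n}\eta_0$ followed by a limit $\eta_0\to 0$, whereas the paper proves the convergence against all Poisson kernels $P_z$ first and then passes to $1_B$ by noting that $1_B$ lies in the $L^1$-closure of $\operatorname{span}\{P_z\}$ and that the measures $\ee\mu_n$ have uniformly Wegner-bounded densities. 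Both are viable, though the paper's route is softer (no $\eta_0$-balancing, no error terms from convolving $1_B$ against the heavy-tailed Poisson kernel).

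The genuine gap is in your last paragraph. You correctly identify that the convergence $\ee\nu_n(P_{z_n})\to\nu(P_{z_n})$ ``lies finer than weak convergence,'' but your proposed fix --- transferring $\ee\nu_n$ onto $\ee\nu_m$ and then comparing $\ee\nu_m$ with $\nu$ at an auxiliary smoothing scale $\eta_1$ with $2^{-n}\ll\eta_1\ll 1$ --- does not work. If $m<n$, $\ee\nu_m$ is farther from $\nu$ than $\ee\nu_n$ is, and if $m>n$ you are already doing what the paper does but in a confused way. In either case the test function $P_{z_n}$ lives at scale $2^{-n}$, and weak convergence plus a uniform Wegner bound give no control over the densities at that scale; smoothing at an intermediate $\eta_1$ simply introduces a third object whose difference from $\ee\nu_n(P_{z_n})$ is again uncontrolled.

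The paper's resolution dispenses with the auxiliary scale entirely. Since $\ee\nu_{p,n}=\ee\nu_n$ for every $p\geq n$, one writes
\begin{equation*}
\ee\nu_n(P_{z_n})-\nu(P_{z_n}) \;=\; \lim_{p\to\infty}\ee\bigl[\nu_{p,n}(P_{z_n})-\nu_p(P_{z_n})\bigr],
\end{equation*}
so the comparison is always between two finite-volume objects of the \emph{same} linear size $2^p$. Telescoping the resolvent-flow estimate~\eqref{resflowbound1} over the levels $k=n+1,\dots,p$ then gives a bound $C_z 2^{-3\delta n}$ \emph{uniformly in $p$} (note that here $\ell=n<k$, so one must use the full $(1+2^{3(\ell-k)})$ form of~\eqref{resflowbound1} rather than the literal statement of Theorem~\ref{approximationbynunm}, which assumes $\ell\geq$ the matrix index; the sum $\sum_{k>n}2^{-ck/2}$ still gives the same exponent). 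After that, $\ee\nu_p(P_{z_n})\to\nu(P_{z_n})$ as $p\to\infty$ for each fixed $n$ by the definition~\eqref{dosdef} of $\nu$, since $P_{z_n}$ is a fixed bounded continuous test function. No scale-balancing is needed; the only parameter taken to infinity inside is $p$, and the scale $2^{-n}$ stays frozen throughout that limit.
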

\begin{proof} By \eqref{hnmdecomposition} we have $ \ee \nu_{p,n} = \ee \nu_{n} $ for any $ p \geq n$, and so we conclude from Theorem~\ref{approximationbynunm} with $\ell = n$ that
\begin{align*}\lim_{n \to \infty} \left|\ee \left[\nu_n(P_{z_n}) - \nu(P_{z_n})\right]\right| &= \lim_{n \to \infty}\lim_{p \to \infty} \left|\ee \left[ \nu_n(P_{z_n}) - \nu_p(P_{z_n}) \right]\right|\\
&= \lim_{n \to \infty}\lim_{p \to \infty} \left| \ee \left[ \nu_{p,n}(P_{z_n}) - \nu_p(P_{z_n}) \right]\right|\\
&\le  \lim_{n \to \infty} C_z \, 2^{-3\delta n} = 0.
\end{align*}
This shows that the measures $\lambda_n(B) = 2^n \nu(2^{-n}B + E)$ satisfy
\[\lim_{n \to \infty}\left( \ee \mu_n(P_z) - \lambda_n(P_z)\right)= 0\]
and Corollary \ref{divisibility} implies that also
\begin{equation}\label{intensityconvergence}
\lim_{n \to \infty}\left(\ee \mu_{m_n}(P_z) - \lambda_n(P_z)\right)= 0.
\end{equation}
For any bounded Borel set $B \subset \rr$, the indicator $1_B$ is in the $L^1$-closure of the finite linear combinations from the set $\{P_z \, | \, z \in \cp\}$ and the measures $\ee \mu_n$ are absolutely continuous with uniformly bounded densities by the Wegner estimate. Together, these two observations yield that \eqref{intensityconvergence} is valid for any bounded Borel set $B \subset \rr$. Moreover, since $E$ is a Lebesgue point of $\nu$,
\[\lim_{n \to \infty} \lambda_n(B) = \lim_{n \to \infty} 2^n \nu(2^{-n}B + E) = \nu(E)|B|,\]
and hence we have shown that
\beq{indicatorconvergence} \lim_{n \to \infty} \ee \mu_{m_n}(B) = \nu(E)|B|.
\eeq
Since $\mu_{n_m, j}(B)$ takes values in the non-negative integers
\[\lim_{n \to \infty} X(n,1) = \lim_{n \to \infty} \sum_{j=1}^{2^{n - m}} \ee \mu_{n_m, j}(B) - \lim_{n \to \infty} \sum_{\ell \geq 2} X(n,\ell)\]
so ~\eqref{cgkbound}, \eqref{indicatorconvergence} and the dominated convergence theorem give
\[\lim_{n \to \infty} X(n,1) = \lim_{n \to \infty} \sum_{j=1}^{2^{n - m}} \ee \mu_{n_m, j}(B) = \nu(E)|B|.\]
\end{proof}

\section{Proof of Eigenfunction Localization}\label{localizationsection}
In this section, we prove Theorem \ref{localizationthm} by comparing the eigenfunctions of $H_n$ with the obviously localized eigenfunctions of $H_{n,m}$. As in Section~\ref{poissonsection}, we drop the normalizing constant $ Z_{n,c} $ from the definition of $ H_n$. The core of this argument again consists of resolvent bounds for Gaussian perturbations, and thus we consider the Green functions
\[G_n(x,y;z) = \langle \delta_y, (H_n - z)^{-1} \delta_x \rangle , \quad G_{n,m}(x,y;z) =  \langle \delta_y, (H_{n,m} - z)^{-1} \delta_x \rangle.\]
If $\eta = 2^{-(1 + \ell)n}$ for some $\ell > 0$, Theorem 1.2 of \cite{resflow} proves that there exists $ C < \infty $ such that
\begin{align*} 2^{-k}\sum_{y \in B_k(x)} \ee \left| G_k\left(x,y; E+i\eta\right)-  G_{k,k-1}\left(x,y;E+i\eta\right) \right| &\le C \, 2^{-\frac{c}{2}k} \left(1 + 2^{3( (1+\ell) n - k)}\right)\\
&= C\,2^{3(1 + \ell)n-3(1+\delta)k}
\end{align*}
with $\delta = c/6$ whenever $k \le n$. Iterating this result, we see that
\begin{align*} & 2^{-n} \sum_{y \in B_n} \ee \left| G_n\left(x,y; E+i\eta\right)-  G_{n,m}\left(x,y;E+i\eta \right) \right|\nonumber\\
&\le 2^{-n} \sum_{k=m+1}^n\sum_{y \in B_n} \ee \left| G_{n,k}\left(x,y; E+i\eta\right)-  G_{n,k-1}\left(x,y;E+i\eta \right) \right|\nonumber\\
&= 2^{-n} \sum_{k=m+1}^n \sum_{y \in B_k(x)} \ee \left| G_{k}\left(x,y; E+i\eta\right)-  G_{k,k-1}\left(x,y;E+i\eta \right) \right|\nonumber\\
&\le 2^{-n} \sum_{k=m+1}^n 2^k C\,2^{3(1 + \ell)n-3(1+\delta)k} \le C\,2^{(3(1+\ell) - 1) n} \, 2^{-(3(1+\delta)-1)m}.
\end{align*}
Since $\delta > 0$, we can choose $\ell > 0$, $\epsilon \in (0,1)$, and $w \in (0, \ell)$ such that
\[2\mu \vcentcolon= (1 - \epsilon)(3(1 + \delta)-1) - (3(1+\ell) -1)  - w > 0.\]
Thus, setting $m_n = (1 - \epsilon)n$ and
\[W = \left[E - 2^{-(1-w)n}, E + 2^{-(1-w)n}\right],\]
and using that $G_{n,m}(x,y;z) = 0$ if $y \notin B_m(x)$ show that
\[\sum_{y \in B_n \setminus B_{m_n}(x)} \ee \int_W \!  \left| \im G_n\left(x,y; E+i\eta\right) \right| \, dE \le C\,2^{-2\mu n}.\]
Applying Markov's inequality, we arrive at
\[\pp\left(\sum_{y \in B_n \setminus B_{m_n}(x)} \int_W \!  \left| \im G_n\left(x,y; E+i\eta\right) \right| \, dE > 2^{-\mu n}\right) \le C\, 2^{-\mu n},\]
so Theorem \ref{localizationthm} follows from Theorem 5.1 of \cite{resflow}, which says that
\[ \sum_{y \in B_n \setminus B_{m_n}(x)} Q_n(x,y;W) \le \sum_{y \in B_n \setminus B_{m_n}(x)} \int_W \! \left| \im G_n(x,y;E+i\eta \right| \, dE + \frac{\log 2^n}{2^{wn}}\]
with probability $1 - \oh\left(2^{(w-\ell)n}\right)$.

\subsection*{Acknowledgments}
We thank Y. V. Fyodorov for introducing us to the ultrametric ensemble. This work was supported by the DFG (WA 1699/2-1).
\bibliographystyle{abbrv}
\bibliography{UltrametricEnsemble}
\bigskip
\begin{minipage}{0.5\linewidth}
\noindent Per von Soosten\\
Zentrum Mathematik, TU M\"{u}nchen\\
Boltzmannstra{\ss}e 3, 85747 Garching, Germany\\
\verb+vonsoost@ma.tum.de+
\end{minipage}%
\begin{minipage}{0.5\linewidth}
\noindent Simone Warzel\\
Zentrum Mathematik, TU M\"{u}nchen\\
Boltzmannstra{\ss}e 3, 85747 Garching, Germany\\
\verb+warzel@ma.tum.de+
\end{minipage}
\end{document}